\newtheorem{theorem}{Theorem}
\newtheorem{definition}{Definition}
\newtheorem{proposition}{Proposition}
\newtheorem{example}{Example}
\theoremstyle{definition}		
\newtheorem{remark}{Remark}
\newcommand{\set}[1]{\mathcal{#1}}
\newcommand{\R}{\set{R}}
\title{Symmetry in Distributed Storage Systems}
\author{\authorblockN{Satyajit Thakor$^\dag$, Terence Chan$^\ddag$ and Kenneth W. Shum$^\dag$\\}
\authorblockA{$^\dag$Institute of Network Coding, The Chinese University of Hong Kong\\ 
$^\ddag$Institute for Telecommunications Research, University of South Australia\\
thakor@inc.cuhk.edu.hk, terence.chan@unisa.edu.au, wkshum@inc.cuhk.edu.hk}}
\begin{document}
\maketitle
\bibliographystyle{ieeetr}

\begin{abstract}
 The max-flow outer bound is achievable by regenerating codes for functional repair distributed storage system. However, the capacity of exact repair distributed storage system is an open problem. In this paper, the linear programming bound for exact repair distributed storage systems is formulated. A notion of symmetrical sets for a set of random variables is given and equalities of joint entropies for certain subsets of random variables in a symmetrical set is established. Concatenation coding scheme for exact repair distributed storage systems is proposed and it is shown that concatenation coding scheme is sufficient to achieve any admissible rate for any exact repair distributed storage system. Equalities of certain joint entropies of random variables induced by concatenation scheme is shown. These equalities of joint entropies are new tools to simplify the linear programming bound and to obtain stronger converse results for exact repair distributed storage systems.
\end{abstract}

\section{Introduction}\label{sec:I}

Distributed storage is a scheme to store data in network nodes in a distributed fashion to provide robustness to stored data in the event of node failure. In distributed data storage, data is encoded into $n$ pieces which are stored at $n$ different nodes in the network. A node requiring this data can reconstruct the data by connecting to any $k<n$ storage nodes. In a \emph{functional repair} distributed storage system, in the event of storage node failure, a replacement node connects to any $d, k\leq d< n$ storage nodes to construct a piece such that the data reconstruction is still possible after node replacement.  In an \emph{exact repair} distributed storage system, in the event of storage node failure, a replacement node connects to any $d$ storage nodes to reconstruct the original piece stored at the failed node.

In \cite{WuDimRam07}, (see also\cite{DimDodWuRam10}) a cut-set based outer bound, henceforth referred as max-flow bound, for distributed storage system is given. A class of codes, called regenerating codes, achieving all rates characterized by the max-flow bound for functional repair distributed storage systems are also introduced in \cite{DimDodWuRam10}; thus establishing the capacity region for functional repair distributed storage systems. For exact repair distributed storage systems, Rashmi {\em et al.} \cite{RasShaKum11} showed that all parameters for the minimum-bandwidth exact-repair regenerating code can be constructed explicitly. For minimum-storage regeneration, it is shown in \cite{CadJafMal} that there is a sequence of exact-repair regenerating codes attaining the optimal repair bandwidth asymptotically. However, it was shown 
that most of the rate points on the boundary of the max-flow bound for exact repair distributed storage systems are not achievable \cite{ShaRasKumRam12}. In a recent development \cite{Tia13}, the author completely characterizes the capacity region for exact repair distributed storage system with $n=4,k=3,d=3$.

In this paper we explore implications of symmetric structure of communication systems. In particular, we show that if a given communication system exhibits certain symmetry in structure, e.g. symmetry in encoding and decoding constraints, then entropies of certain subsets of random variables in the communication system are the same. This result is a significant new tool for proving converse theorems for communication systems with symmetric structure.

In Section \ref{sec:ERDS}, we describe model, capacity and an outer bound for exact repair distributed storage systems. In Section \ref{sec:MR} we formulate the LP bound for exact repair distributed storage systems. We then give a notion of symmetry for generic communication systems and prove equalities of entropies for subsets of random variables, satisfying symmetry, induced by a communication system. We devise a new coding scheme, called concatenation scheme, for exact repair distributed storage systems. It is shown that the concatenation schemes are sufficient for optimal rate. We also prove equality of joint entropies for subsets of random variables for concatenation scheme under permutation. The result is proved for any distributed storage system with any $n,k,d$. In independent and concurrent work \cite{Tia13}, the author observed and employed symmetry in $n=4,k=3,d=3$ exact repair distributed storage system to show a converse result. In Section \ref{sec:A}, we demonstrate by example a utility of the main result for exact repair distributed storage systems.

\section{Exact Repair Distributed Storage}\label{sec:ERDS}
\subsection{Model}
\begin{figure}[htbp]
\centering
  \includegraphics[scale=0.38]{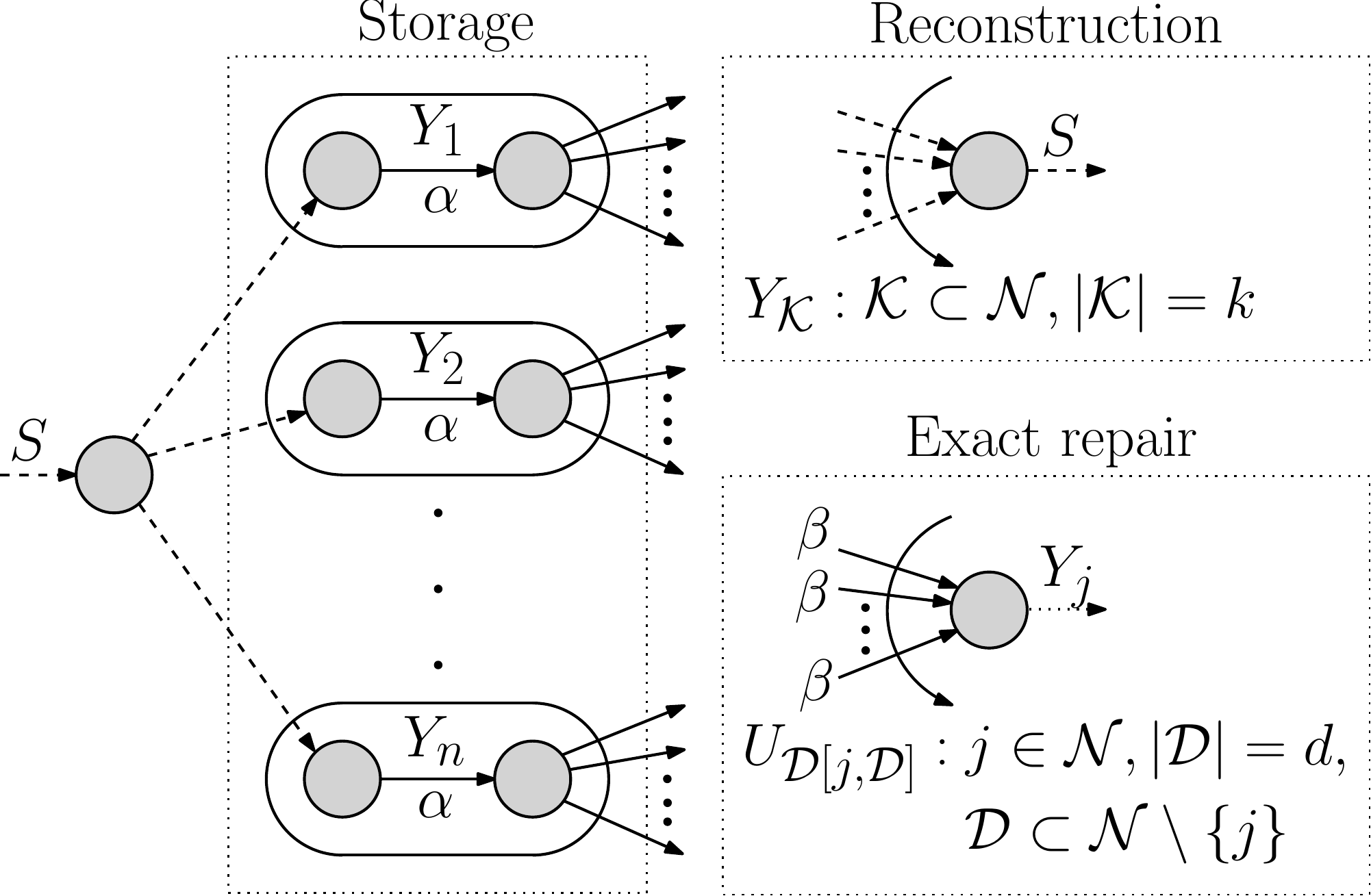}
  \caption{The exact repair distributed storage model}\label{fig:DS}
\end{figure}

Let $S$ be the uniformly distributed random variable associated with the source data to be stored distributively. The support of $S$ is denoted by $\mathcal S$. The rate of source data is $R=\log |\mathcal S|=H(S)$. Let $\mathcal N=\{1,\ldots,n\}$ be the set of storage nodes. Each node has identical storage capacity $\alpha>0$. This fact is depicted in Figure \ref{fig:DS} by two nodes within a storage node and an edge with capacity $\alpha$ connecting two nodes. The random variable $S$ is directly available to all storage nodes and this is depicted in the figure by infinite capacity directed edges (dotted) from the node where the source $S$ is available to the storage nodes. The random variable stored at node $i\in\mathcal N$ is denoted by $Y_i$ and its support is denoted by $\mathcal Y_i$.

A requirement of the distributed storage system is that $S$ must be reconstructible by any node $i \not \in \mathcal N$ in the network by accessing any set of random variables $Y_{\mathcal K}\triangleq\{Y_j, j \in \mathcal K\}, \mathcal K \subset \mathcal N$ from $|\mathcal K|=k$  storage nodes. In addition, the exact repair storage system requires that, in the case of failure of any storage node $j \in \mathcal N$, a replacement storage node must be able to reconstruct $Y_j$ by accessing a set of random variables $U_{\mathcal D[j,\mathcal D]}\triangleq \{U_{i[j,\mathcal D]},i\in \mathcal D\}$ via the set of nodes $\mathcal D \subset \mathcal N \setminus \{j\}, |\mathcal D|=d$ using directed edges of capacity $\beta>0$. The set of nodes $\mathcal D$ participating in the repair of any node is referred as the set of \emph{helper nodes}. Each helper node participating in the repair process transmits a random variable along an edge of capacity $\beta$. Let $U_{i[j,\mathcal D]}$ denote a random variable transmitted by the storage node $i$ for repair of the storage node $j$ when the set of helper nodes is $\mathcal D$. the support of $U_{i[j,\mathcal D]}$ is denoted by $\mathcal U_{i[j,\mathcal D]}$. Note that, $i \in \mathcal D, \mathcal D \subset \mathcal N \setminus \{j\}$. For distributed storage systems with $d=n-1$, there is only one set $\mathcal D$ for given $i,j$ in $U_{i[j,\mathcal D]}$ and hence the notation $U_{i[j,\mathcal D]}$ is simplified to $U_{i[j]}$. 
For simplicity, any reconstruction node is labeled by $n+1$ and any replacement node for the failed storage node $i$ is labeled by $i$. A distributed storage system with parameters $n,k,d,\alpha$ and $\beta$ is referred as $(n,k,d,\alpha,\beta)$\emph{-distributed storage system}.

\subsection{Capacity}
\begin{definition} [Exact repair code]
An exact repair code $(\Phi,\Psi)$ for a given distributed storage system is described by the sets of its encoding functions $\Phi \triangleq \{\phi_{Y_i},\phi_{U_{i[j,\mathcal D]}}\}$ and decoding functions $\Psi \triangleq \{\psi_{Y_j},\psi_{S}\}$:
\begin{align}\label{eq:DScode}
\phi_{Y_i}&: \mathcal S \rightarrow \mathcal Y_{i} , i \in \mathcal N\\
\phi_{U_{i[j,\mathcal D]}}&: \mathcal Y_i \rightarrow \mathcal U_{i[j,\mathcal D]} , i \in \mathcal D, \mathcal D \subset \mathcal N  \setminus \{j\}, |\mathcal D|=d\\
\psi_{Y_j}&: \prod_{i \in \mathcal D} \mathcal U_{i[j,\mathcal D]} \rightarrow \mathcal Y_{j} , j \in \mathcal N, \mathcal D \in \mathcal N \setminus \{j\}, |\mathcal D|=d\\
\psi_{S}&: \prod_{i \in \mathcal K} \mathcal Y_{i} \rightarrow \mathcal S, \mathcal K \subset \mathcal N, |\mathcal K|=k
\end{align}
where $\phi_{Y_i}$ 
are storage encoding functions, $\phi_{U_{i[j,\mathcal D]}}$ 
are repair encoding functions, $\psi_{Y_j}$ 
are exact repair decoding function and $\psi_{S}$ 
are reconstruction decoding functions.

\end{definition}

In practice, it is desired that the probability of error is zero. Below we define admissible rate for exact repair codes for distributed storage systems, i.e., the feasible rate for zero-error exact repair codes.

\begin{definition}[Admissible rate]
Consider a given distributed storage system with random variable $S$ associated with the data to be stored distributively at $n$ storage nodes with storage capacity $\alpha$. Let $\beta$ be the repair link capacity for the distributed storage system.  An information rate $R$ is called admissible if there exists a sequence of exact repair codes $(\Phi^{(m)},\Psi^{(m)}), m=1,2,...$ such that 
\begin{align}
\lim_{m \rightarrow \infty} \frac{\log |\mathcal Y^{(m)}_{i}|}{\log |\mathcal S^{(m)}|} \leq & \hspace{1mm}\alpha, \forall i \in \mathcal N\\
\lim_{m \rightarrow \infty} \frac{\log |\mathcal U^{(m)}_{i[j,\mathcal D]}|}{\log |\mathcal S^{(m)}|} \leq &\hspace{1mm} \beta, \forall j \in \mathcal N, i \in \mathcal D, 
\mathcal D \subset \mathcal N \setminus \{j\}\\
\mathrm{Pr}\{\psi_{S}(Y_{\mathcal K}^{(m)}) \neq S^{(m)}\} = &\hspace{1mm} 0, \forall \mathcal K \subset \mathcal N, |\mathcal K|=k\\
\mathrm{Pr}\{\psi_{Y_j}(U_{\mathcal D[j,\mathcal D]}^{(m)}) \neq Y_{j}^{(m)}\} = &\hspace{1mm} 0, \forall j \in \mathcal N,\mathcal D \subset \mathcal N\setminus\{j\},\nonumber\\
& \hspace{6mm}
|\mathcal D|=d
\end{align}
where $\psi_{S}(Y_{\mathcal K}^{(m)})$ is the decoded estimate of $S^{(m)}$ from $Y_{\mathcal K}^{(m)}$ via the mapping $\psi_{S}$ and $\psi_{Y_j}(U_{\mathcal D[j,\mathcal D]})$ is the decoded estimate of $Y_j$ from $U_{\mathcal D[j,\mathcal D]}$ via the mapping $\psi_{Y_j}$.
\end{definition}

The zero-error capacity region is the set of all admissible rates.

\subsection{Max-flow Bound}
The max-flow bound \cite{Yeu08} establishes that the achievable information rate of any network cannot be greater than the minimum of total capacity of set of edges (referred as min-cut) separating the source and the sink node. Below is the max-flow bound for distributed storage systems.
\begin{theorem}[Max-flow bound, \cite{WuDimRam07},\cite{DimDodWuRam10}]\label{thm:cut-setbound}
For a given $(n,k,d,\alpha,\beta)$-distributed storage system 
\begin{align}
R \leq \sum_{i=0}^{k-1}\min\{\alpha,(d-i)\beta\}.
\end{align}
\end{theorem}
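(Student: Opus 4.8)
The plan is to prove the inequality first for a single fixed exact repair code by an information‑theoretic cut‑set argument, and then pass to the limit along an admissible code sequence. Fix a code $(\Phi,\Psi)$. Since every $k$-subset of $\mathcal N$ is a valid reconstruction set, $S$ is in particular a deterministic function of $(Y_1,\ldots,Y_k)$, so $H(S\mid Y_1,\ldots,Y_k)=0$ and the chain rule gives
\begin{align}
R=H(S)&\le H(Y_1,\ldots,Y_k)\nonumber\\
&=\sum_{i=1}^{k}H(Y_i\mid Y_1,\ldots,Y_{i-1}).\nonumber
\end{align}
It therefore suffices to show that each summand obeys $H(Y_i\mid Y_1,\ldots,Y_{i-1})\le\min\{\alpha,(d-i+1)\beta\}$, up to the $\log|\mathcal S|$ normalisation built into the definition of admissible rate.

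The bound by $\alpha$ is immediate, since $H(Y_i\mid Y_1,\ldots,Y_{i-1})\le H(Y_i)\le\log|\mathcal Y_i|$. For the bound by $(d-i+1)\beta$ I would combine the exact repair property with a careful choice of helper set. Because $i\le k\le d$ and $d<n$, we have $i-1<d\le n-1$, so we may fix a set $\mathcal D_i\subseteq\mathcal N\setminus\{i\}$ with $|\mathcal D_i|=d$ and $\{1,\ldots,i-1\}\subseteq\mathcal D_i$; write $\mathcal D_i=\{1,\ldots,i-1\}\cup\mathcal E_i$, so that $|\mathcal E_i|=d-i+1$. Exact repair of node $i$ using helper set $\mathcal D_i$ makes $Y_i$ a deterministic function of the $d$ downloaded variables $U_{\ell[i,\mathcal D_i]}$, $\ell\in\mathcal D_i$. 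Of these, the $i-1$ variables indexed by $\ell\in\{1,\ldots,i-1\}$ are functions of $Y_\ell$, hence of the conditioning $(Y_1,\ldots,Y_{i-1})$, so they carry no conditional entropy; dropping them and applying subadditivity over the remaining indices in $\mathcal E_i$ gives
\begin{align}
H(Y_i\mid Y_1,\ldots,Y_{i-1})&\le\sum_{\ell\in\mathcal E_i}H(U_{\ell[i,\mathcal D_i]})\nonumber\\
&\le(d-i+1)\max_{\ell\in\mathcal E_i}\log|\mathcal U_{\ell[i,\mathcal D_i]}|.\nonumber
\end{align}

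Combining the two estimates shows $H(Y_i\mid Y_1,\ldots,Y_{i-1})\le\log|\mathcal Y_i|$ and simultaneously $H(Y_i\mid Y_1,\ldots,Y_{i-1})\le(d-i+1)\max_{\ell\in\mathcal E_i}\log|\mathcal U_{\ell[i,\mathcal D_i]}|$, and summing over $i=1,\ldots,k$ bounds $H(S)$. Feeding in the storage and repair‑link constraints of the code — in the asymptotic form of the definition of admissible rate, i.e. dividing by $\log|\mathcal S^{(m)}|$ and letting $m\to\infty$ so that $\log|\mathcal Y_i^{(m)}|/\log|\mathcal S^{(m)}|\le\alpha$ and $\log|\mathcal U_{i[j,\mathcal D]}^{(m)}|/\log|\mathcal S^{(m)}|\le\beta$, and carrying the $\min$ through the finitely many terms — yields $R\le\sum_{i=1}^{k}\min\{\alpha,(d-i+1)\beta\}$, which is exactly $\sum_{i=0}^{k-1}\min\{\alpha,(d-i)\beta\}$ after the reindexing $i\mapsto i+1$. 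I expect the crux of the argument, and the only step that is not a routine application of entropy inequalities, to be the selection of the helper set $\mathcal D_i$ so that it contains all the earlier nodes $1,\ldots,i-1$; this is precisely where the hypotheses $d\ge k$ and $d<n$ enter. Everything else — the chain rule, subadditivity, the functional‑dependence reductions and the limiting bookkeeping — is routine.
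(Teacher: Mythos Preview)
The paper does not supply its own proof of this theorem; it is stated with citation to \cite{WuDimRam07} and \cite{DimDodWuRam10} and only prefaced by a one-sentence reminder that the max-flow/min-cut theorem upper bounds achievable rate by the capacity of any cut separating source and sink. Your proposal is correct and is precisely the information-theoretic rendering of that min-cut argument as it appears in the cited references: order the $k$ reconstructed nodes, expand $H(Y_1,\ldots,Y_k)$ by the chain rule, and for the $i$th term choose a helper set containing the $i-1$ already-conditioned nodes so that only $d-i+1$ repair messages contribute new entropy. The only step worth flagging is the passage from a single code to an admissible rate: since the capacity constraints in Definition~2 hold only in the limit $m\to\infty$, you should phrase the bound with an $\epsilon$ slack (e.g.\ $\log|\mathcal Y_i^{(m)}|/\log|\mathcal S^{(m)}|\le\alpha+\epsilon$ for all large $m$) and let $\epsilon\downarrow0$ at the end, rather than writing the inequality as if it held for each fixed $m$; this is purely cosmetic and does not affect the argument.
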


It has been shown that the max-flow bound is not admissible for exact repair \cite{ShaRasKumRam12} and hence tighter outer bounds may be obtained using information inequalities. The basic inequalities for entropy are polymatroid axioms and the minimal set \cite{Yeu97} of basic inequalities is called the elemental set. The elemental inequalities for polymatroidal function $H$ over a set $\mathcal V$ are
\begin{align}
H(A|\mathcal V \setminus A) &\geq 0, A \in \mathcal V \label{eq:minimal_h}\\
I(A;B|\mathcal C) &\geq 0, A \neq B, \mathcal C \subseteq \mathcal V \setminus \{A,B\}. \label{eq:minimal_I}
\end{align}
The set of all polymatroidal functions is denoted by $\Gamma$.

\section{Main Results}\label{sec:MR}

\subsection{LP Bound for Distributed Storage System}\label{sec:bounds}
An explicit (computable) outer bound called linear programming (LP) bound for multi-source multi-sink networks is characterized in \cite{Yeu08}. In this section we formulate the linear programming outer bound for the capacity of distributed data storage systems. We begin by observing the constraints induced by a $(n,k,d,\alpha,\beta)$-distributed storage system.

Note that, for any $(n,k,d,\alpha,\beta)$-distributed storage system, the random variable $Y_i$ stored at node $i, i \in \mathcal N$ is a function of the data random variable $S$ and these constraints are referred as storage encoding constraints \eqref{eq:StrgEnCnstr}. The rate of any random variable $Y_i$ is less than the capacity of storage nodes; these are called storage capacity constraints \eqref{eq:StrgCCnstr}. Similarly, a random variable $U_{i[j,\mathcal D]}$ generated at the node $i$ is a function of  the random variable $Y_i$ stored at $i$. These are the repair encoding constraints \eqref{eq:RprEnCnstr}. The rate of a random variable $U_{i[j,\mathcal D]}$ cannot exceed the capacity of repair links and this fact is referred as repair link capacity constraints \eqref{eq:RprCCnstr}. The exact repair decoding constraints are \eqref{eq:RprDCnstr} describing the requirement that the random variables $Y_j$ are reconstructible from the set of random variables $U_{\mathcal D[j,\mathcal D]}$. Finally, the reconstruction constraints are \eqref{eq:RecnDCnstr} describing the requirement that the data random variable $S$ is reconstructible from any $k$ storage random variables $Y_{\mathcal K}$.
\begin{align}
H(Y_i|S)&=0, \forall i\in \mathcal N \label{eq:StrgEnCnstr}\\
H(Y_i) &\leq \alpha, \forall i \in \mathcal N \label{eq:StrgCCnstr}\\
H(U_{i[j,\mathcal D]}|Y_i)&=0, \forall i\in \mathcal D, \mathcal D \subset \mathcal N \setminus \{j\} \label{eq:RprEnCnstr}\\
H(U_{i[j,\mathcal D]}) &\leq \beta, \forall i\in \mathcal D, \mathcal D \subset \mathcal N \setminus \{j\}, j \in \mathcal N\label{eq:RprCCnstr}\\
H(Y_j|U_{\mathcal D[j,\mathcal D]})&=0, \forall j\in \mathcal N, \mathcal D \subseteq \mathcal N \setminus \{j\} \label{eq:RprDCnstr}\\
H(S|Y_{\mathcal K})&=0, \forall \mathcal K \subseteq \mathcal N \label{eq:RecnDCnstr}
\end{align}

\begin{definition}\label{def:LPBound}
Given a $(n,k,d,\alpha,\beta)$-distributed storage system with random variables
\begin{gather*}
S,\\
Y_1,\ldots,Y_n,\\
U_{i[j,\mathcal D]}, j \in \mathcal N, i \in \mathcal D \subseteq \mathcal N \setminus\{j\}.
\end{gather*}
and the set of polymatroidal entropy functions $\Gamma$ on the random variables, let $\R_{{\text{LP}}}$ be the set of source polymatroidal entropy functions $H(S)$ (non-negative real numbers) for which there exists $H\in\Gamma$ satisfying \eqref{eq:StrgEnCnstr}-\eqref{eq:RecnDCnstr}.
\end{definition}

\begin{theorem}\label{thm:LPOB}
The region $\R_{{\text{LP}}}$ is an outer bound for the set of admissible rates of exact repair distributed storage systems.
\end{theorem}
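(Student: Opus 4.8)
The plan is to show that any admissible rate $R$ lies in $\R_{\text{LP}}$, by extracting from a converging sequence of exact repair codes a polymatroidal entropy function that satisfies all the constraints \eqref{eq:StrgEnCnstr}--\eqref{eq:RecnDCnstr}. First I would fix an admissible rate $R$ and, for each block length $m$, take the exact repair code $(\Phi^{(m)},\Psi^{(m)})$ guaranteed by the definition of admissibility. The joint distribution of the tuple $(S^{(m)}, \{Y_i^{(m)}\}, \{U_{i[j,\mathcal D]}^{(m)}\})$ induces an entropy function $H^{(m)}$ on the ground set of all these random variables; since Shannon entropy is always polymatroidal, $H^{(m)}\in\Gamma$.

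Next I would verify that $H^{(m)}$ satisfies the equality constraints exactly and the inequality constraints approximately. The zero-error decoding conditions in the definition of admissibility force $H^{(m)}(Y_j\mid U_{\mathcal D[j,\mathcal D]}^{(m)})=0$ and $H^{(m)}(S\mid Y_{\mathcal K}^{(m)})=0$, matching \eqref{eq:RprDCnstr} and \eqref{eq:RecnDCnstr}; the functional form of the encoders $\phi_{Y_i}$ and $\phi_{U_{i[j,\mathcal D]}}$ (a point in $\mathcal Y_i$ is a deterministic function of $S$, and a point in $\mathcal U_{i[j,\mathcal D]}$ a deterministic function of $Y_i$) gives \eqref{eq:StrgEnCnstr} and \eqref{eq:RprEnCnstr} exactly. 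For the capacity constraints I would use $H^{(m)}(Y_i)\le \log|\mathcal Y_i^{(m)}|$ and $H^{(m)}(U_{i[j,\mathcal D]})\le \log|\mathcal U_{i[j,\mathcal D]}^{(m)}|$, and note $H^{(m)}(S)=\log|\mathcal S^{(m)}|$ since $S$ is uniform; dividing through by $\log|\mathcal S^{(m)}|$ gives a normalized entropy function $\widehat H^{(m)}\defined H^{(m)}/\log|\mathcal S^{(m)}|$ which still lies in $\Gamma$ (a cone), still satisfies the homogeneous equality constraints, has $\widehat H^{(m)}(S)=1$, and satisfies $\widehat H^{(m)}(Y_i)\le \alpha_m$, $\widehat H^{(m)}(U_{i[j,\mathcal D]})\le\beta_m$ where $\alpha_m,\beta_m$ tend to something no larger than $\alpha,\beta$.

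Finally I would pass to the limit. The normalized functions $\widehat H^{(m)}$ lie in a bounded subset of a finite-dimensional space (bounded because each normalized entropy is at most the normalized joint entropy, which is controlled by the capacity constraints and $\widehat H^{(m)}(S)=1$ plus the functional dependencies), so by Bolzano--Weierstrass a subsequence converges to some $H^\star$. Since $\Gamma$ is closed and all the constraints \eqref{eq:StrgEnCnstr}--\eqref{eq:RecnDCnstr} are closed conditions (equalities, and inequalities with $\limsup$ of the right-hand data bounded by $\alpha,\beta$), $H^\star\in\Gamma$ satisfies every constraint with $H^\star(S)=1$; rescaling by $R$ then exhibits $R\in\R_{\text{LP}}$. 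The main obstacle I anticipate is the bookkeeping needed to guarantee the sequence $\widehat H^{(m)}$ is bounded and that the limit genuinely respects the capacity inequalities — one must be careful that the $\limsup$ of $\log|\mathcal Y_i^{(m)}|/\log|\mathcal S^{(m)}|$ being $\le\alpha$ survives the subsequence extraction, which it does since a $\limsup$ bound is inherited by subsequences — rather than any deep step; conceptually this is the standard "entropy functions form the correct relaxation" argument applied to the distributed storage constraint set.
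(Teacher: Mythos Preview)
Your proposal is correct and follows essentially the same approach the paper intends: the paper does not give a self-contained proof but simply states that ``the approach used in proving the LP bound for multi-source multi-sink networks \cite{Yeu08} can be directly extended,'' and what you have written is precisely that standard argument (entropic function of a code lies in $\Gamma$, the code's functional/zero-error structure enforces the equality constraints, alphabet sizes control the capacity constraints, normalize and pass to a convergent subsequence). Your write-up is in fact more explicit than the paper's, which omits all of these details; the only cosmetic point is that, given the paper normalizes the capacity constraints by $\log|\mathcal S^{(m)}|$ in Definition~2, the limit $H^\star$ with $H^\star(S)=1$ already witnesses feasibility directly, so the final ``rescale by $R$'' step should be phrased in terms of the $(R,\alpha,\beta)$ tradeoff rather than as a literal scalar multiplication.
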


The approach used in proving the LP bound for multi-source multi-sink networks \cite{Yeu08} can be directly extended to prove Theorem \ref{thm:LPOB} for exact repair distributed storage systems. The LP bound can be computed by solving a linear program. The region $\R_{\emph{\text{LP}}}$ is projection of the region characterized by the elemental inequalities and the constraints \eqref{eq:StrgEnCnstr}-\eqref{eq:RecnDCnstr} on to the coordinate associated with the source entropy $H(S)$. In Definition \ref{def:LPBound}, by fixing the parameters $\alpha$ and $\beta$, we obtain the LP outer bound on admissible rate. But in general one can fix any two of the parameters $R, \alpha, \beta$ and obtain the LP outer bound on the admissible value of the third parameter.

In its standard form, the LP bound has an exponential number of dimensions and inequality constraints for a given set of random variables. One approach to reduce the size of the problem is to exploit functional dependence relationships among the set of random variables \cite{ThaGraCha09}, \cite{ThaGraCha11}.
\subsection{Symmetrical Sets}
In this section, we give a notion of symmetrical sets for generic communication system with $n$ random variables and show equalities of joint entropies for certain subsets of random variables. Firstly we define symmetry for a set of subsets of a given set. The notion is then extended for a set of subsets of random variables. A new set of auxiliary random variables is constructed using permutation functions. Equality of joint entropies for the auxiliary random variables are proved under the permutation mapping.

Let $\mathcal N =\{1,\ldots,n\}$. Let $\sigma$ be a permutation mapping
\begin{eqnarray}
\sigma : \{1,\ldots,n\}\rightarrow \{1,\ldots,n\}
\end{eqnarray}
and $\Sigma$ be the set of all possible permutations and hence $|\Sigma|=n!$. 
For any subset $\alpha$ of $\mathcal N$, we can define
\begin{eqnarray}
\sigma(\alpha)\triangleq \{\sigma(i): i \in \alpha\}.
\end{eqnarray}
We further define the following sets.
\begin{align}
\mathcal J_{r} &\triangleq \{\alpha: \alpha \subseteq \mathcal N, |\alpha|=r\}\\
\mathcal J &\triangleq \bigcup _{r=1}^{n}\mathcal J_{r}
\end{align}
\begin{definition}[Variable symmetry]
Let $\mathcal M \subseteq \mathcal J$. Then $\mathcal M$ is called symmetric if for all permutations $\sigma\in \Sigma$ and $\alpha \in \mathcal M, \sigma(\alpha) \in \mathcal M$.
\end{definition}
For any $\mathcal A \subseteq \mathcal M$ and $\sigma \in \Sigma$, $\sigma(\mathcal A)$ is a subset of $\mathcal M$ defined as follows.
\begin{align}
\sigma(\mathcal A) \triangleq \{\sigma (\alpha): \alpha \in \mathcal A\}.
\end{align}
From now on, we will assume that $\mathcal M$ is symmetric. In addition, we will consider a set of random variables
\begin{align}
X_{1}, \ldots, X_{n}
\end{align}
and joint random variables
\begin{align}
X_{\alpha}=(X_{i}:i \in \alpha), \alpha \in \mathcal M.
\end{align}

Let $X_{\mathcal M}$ denote the symmetric set of random variables $\{X_{\alpha}, \alpha \in \mathcal M\}$.
For each $\alpha \in \mathcal M$ consider independent and identically distributed (i.i.d.) random variables
\begin{align}
X_{\sigma(\alpha)}^{\sigma}, \sigma \in \Sigma
\end{align}
such that the joint probability distribution of $X_{\mathcal A}=\{X_{\alpha},\alpha \in \mathcal A\}$ is the same as $X_{\sigma(\mathcal A)}^{\sigma}=\{X_{\sigma(\alpha)}^{\sigma},\alpha \in \mathcal A\}$ and independent for all possible permutations.
Hence
\begin{align}
H(X_{\alpha})&=H(X_{\sigma(\alpha)}^{\sigma}), \sigma \in \Sigma\\
H(X_{\alpha}, \alpha \in {\mathcal A})&=H(X_{\sigma(\alpha)}^{\sigma}, \alpha \in \mathcal A),\sigma \in \Sigma\\
H(X_{\sigma(\alpha)}^{\sigma}, \alpha \in \mathcal A, \sigma \in \Sigma)&=\sum_{\sigma \in \Sigma} H(X_{\sigma(\alpha)}^{\sigma}, \alpha \in \mathcal A).
\end{align}

\begin{remark}
In the notation $X_{\sigma(\alpha)}^{\sigma}$, the superscript signifies that for each permutation the random variable is distinct even though the subscript may be the same. For example, let $\sigma, \tau \in \Sigma$ 
then the random variables $X_{\sigma(\alpha)}^{\sigma}$ and $X_{\tau(\alpha)}^{\tau}$ are distinct random variables. That is, $X_{\sigma(\alpha)}^{\sigma}$ and $X_{\tau(\alpha)}^{\tau}$ have identical distribution (same as that of $X_{\alpha}$) but the distinct superscripts suggest that they are independent.
\end{remark}

Now, construct auxiliary random variables as follows.
\begin{align}
W_{\alpha}= (X_{\sigma(\alpha)}^{\sigma}, \sigma \in \Sigma), \forall \alpha \in \mathcal M
\end{align}

\begin{proposition}\label{thm:EntropyEq}
Suppose the set of random variables $\{W_{\alpha},\alpha \in \mathcal M\}$ is defined as above. Then for any $\mathcal A \subseteq \mathcal M$ and $\sigma \in \Sigma$
\begin{align}
H(W_{\alpha}, \alpha \in \mathcal A)= H(W_{\alpha}, \alpha \in \sigma(\mathcal A)).
\end{align}
\end{proposition}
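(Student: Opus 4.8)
The plan is to reduce the statement to two structural features of the construction: (i) the $n!$ ``layers'' $\{X^{\sigma}_{\sigma(\alpha)}:\alpha\in\mathcal M\}$, indexed by $\sigma\in\Sigma$, are mutually independent, and (ii) each layer is (distributionally) a copy of the original family $\{X_{\gamma}:\gamma\in\mathcal M\}$, so that restricting the bundle $\{W_{\alpha}:\alpha\in\mathcal A\}$ to a single layer $\sigma$ leaves a copy of the permuted subfamily $\{X_{\gamma}:\gamma\in\sigma(\mathcal A)\}$ of the original variables. Writing $\sigma_{0}$ for the permutation in the statement (I keep $\sigma$ for the layer index), the conclusion will fall out because precomposition with $\sigma_{0}$ is a bijection of $\Sigma$ onto itself.

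First I would expand $W_{\alpha}=(X^{\sigma}_{\sigma(\alpha)}:\sigma\in\Sigma)$ and invoke the independence across layers --- which is exactly the last of the three displayed identities preceding the proposition --- to get $H(W_{\alpha},\alpha\in\mathcal A)=\sum_{\sigma\in\Sigma}H(X^{\sigma}_{\sigma(\alpha)},\alpha\in\mathcal A)$. Next, for fixed $\sigma$ the collection $\{X^{\sigma}_{\sigma(\alpha)}:\alpha\in\mathcal A\}$ equals $\{X^{\sigma}_{\gamma}:\gamma\in\sigma(\mathcal A)\}$ and by (ii) has the joint law of $\{X_{\gamma}:\gamma\in\sigma(\mathcal A)\}$; hence, abbreviating $f(\mathcal B):=H(X_{\beta},\beta\in\mathcal B)$ for $\mathcal B\subseteq\mathcal M$, this becomes $H(W_{\alpha},\alpha\in\mathcal A)=\sum_{\sigma\in\Sigma}f(\sigma(\mathcal A))$. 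Because $\mathcal M$ is symmetric, $\sigma_{0}(\mathcal A)\subseteq\mathcal M$, so the same computation with $\sigma_{0}(\mathcal A)$ in place of $\mathcal A$ gives $H(W_{\alpha},\alpha\in\sigma_{0}(\mathcal A))=\sum_{\sigma\in\Sigma}f(\sigma(\sigma_{0}(\mathcal A)))=\sum_{\sigma\in\Sigma}f((\sigma\sigma_{0})(\mathcal A))$. Substituting $\tau=\sigma\sigma_{0}$, legitimate since $\sigma\mapsto\sigma\sigma_{0}$ permutes $\Sigma$, turns the right-hand side into $\sum_{\tau\in\Sigma}f(\tau(\mathcal A))=H(W_{\alpha},\alpha\in\mathcal A)$; and since $\{W_{\sigma_{0}(\alpha)}:\alpha\in\mathcal A\}=\{W_{\beta}:\beta\in\sigma_{0}(\mathcal A)\}$, this is the asserted equality.

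An equivalent and perhaps cleaner phrasing, which sidesteps the index bookkeeping, is to prove the stronger claim that the full tuples $(W_{\alpha})_{\alpha\in\mathcal M}$ and $(W_{\sigma_{0}(\alpha)})_{\alpha\in\mathcal M}$ are identically distributed, and then restrict to the coordinates indexed by $\mathcal A$ (projection preserves equality in law, hence equality of entropy). Both tuples are the image of the layer family $(X^{\sigma}_{\cdot})_{\sigma\in\Sigma}$ under one and the same deterministic map; the second one is that map applied after relabeling the layers by $\sigma\mapsto\sigma\sigma_{0}^{-1}$, and since the layers are independent and identically distributed this relabeling leaves their joint law unchanged.

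The step I expect to need the most care is (ii): pinning down precisely which permuted subset of $\{X_{\gamma}\}$ a given layer reproduces, and being consistent about the direction of the relabeling and the side on which $\sigma_{0}$ enters the composition, so that the reindexing $\tau=\sigma\sigma_{0}$ at the end is a genuine bijection rather than an accidental mismatch. Beyond that, the argument only uses additivity of entropy over independent blocks together with the group structure of $\Sigma$.
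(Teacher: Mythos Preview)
Your proposal is correct and follows essentially the same route as the paper: expand each $W_{\alpha}$ into its $n!$ layers, use independence across layers to turn the joint entropy into a sum over $\Sigma$, and then reindex that sum via the bijection of $\Sigma$ given by right-multiplication with the fixed permutation (the paper writes this as $\mu=\zeta\sigma^{-1}$, your $\tau=\sigma\sigma_{0}$). Your explicit separation of the layer index from the fixed permutation $\sigma_{0}$, and your intermediate function $f(\mathcal B)=H(X_{\beta},\beta\in\mathcal B)$, make the bookkeeping cleaner than in the paper's chain of equalities, but the substance is identical; your alternative distributional phrasing is also sound and amounts to the same relabeling-of-i.i.d.-layers argument.
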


\begin{proof}
\begin{align}
H(W_{\alpha}, \alpha \in \mathcal A)&=H(X_{\zeta(\alpha)}^\zeta, \zeta \in \Sigma)\\
&=\sum_{\zeta \in \Sigma} H(X_{\zeta(\alpha)}^\zeta, \alpha \in \mathcal A)\\
&=\sum_{\mu \in \Sigma} H(X_{\mu\sigma(\alpha)}^{\mu\sigma}, \alpha \in \mathcal A)\label{eq:eq1thm1}\\
&=\sum_{\mu \in \Sigma} H(X_{\mu(\beta)}, \beta \in \sigma(\mathcal A))\\
&=H(W_{\alpha}, \alpha \in \sigma(\mathcal A))
\end{align}
where \eqref{eq:eq1thm1} follows by letting $\mu=\zeta \sigma^{-1}$.
\end{proof}
\subsection{Concatenation Scheme for Distributed Storage}

Now we propose a new coding scheme, called concatenation scheme, for distributed storage systems. We show that the concatenation scheme is sufficient to achieve any admissible rate. We also show that the concatenation schemes have symmetrical properties and hence entropies for certain subsets of random variables induced by concatenation scheme are equal.

Let there be a $(n,k,d,\alpha,\beta)$-distributed storage system with random variables
\begin{gather*}
S,\\
Y_1,\ldots,Y_n,\\
U_{i[j,\mathcal D]}, j \in \mathcal N, i \in \mathcal D \subseteq \mathcal N \setminus\{j\}.
\end{gather*}
Note that, the random variables are induced by a scheme or a code. Then, there exist another scheme, called a \emph{permutation scheme}, for the $(n,k,d,\alpha,\beta)$-distributed storage system with random variables as follows.
\begin{gather*}
S^{\sigma},\\
Y_{\sigma(1)}^{\sigma},\ldots,Y_{\sigma(n)}^{\sigma},\\
U_{\sigma(i)[\sigma(j),\sigma(\mathcal D)]}, \sigma(j) \in \mathcal N, \sigma(i) \in \sigma(\mathcal D) \subseteq \mathcal N \setminus\{\sigma(j)\}.
\end{gather*}
where $\sigma$ is a permutation function defined by mapping $\sigma: \mathcal N\rightarrow \mathcal N$.  Let $\Sigma$ be the set of all possible permutation functions for the set $\mathcal N$.

\begin{definition}
Given a distributed storage scheme with random variables $S,Y_1,\ldots,Y_n, U_{i[j,\mathcal D]}, j \in \mathcal N, i \in \mathcal D \subseteq \mathcal N \setminus\{j\}$,  the concatenation scheme is designed by combining all possible permutation schemes as follows.
\begin{align}
V&=(S^{\sigma},\sigma\in \Sigma)\\
A_{i}&=(Y_{\sigma(i)}^{\sigma}, \sigma \in \Sigma)\\
B_{i[j,\mathcal D]}&= (U_{\sigma(i)[\sigma(j),\sigma(\mathcal D)]}^{\sigma}, \sigma \in \Sigma)
\end{align}
where $S^\sigma, \sigma \in \Sigma$ are i.i.d. random variables with the distribution of random variable $S$.
\end{definition}

\begin{example}[Concatenation scheme]
We describe a concatenation scheme for $(3,2,2,\alpha,\beta)$-distributed storage system with the following random variables.
\begin{gather*}
S,
Y_1,Y_2,Y_3,\\
U_{1[2]},U_{1[3]},U_{2[1]},U_{2[3]},U_{3[1]},U_{3[2]}
\end{gather*}

Note that, given a coding scheme for $(3,2,2,\alpha,\beta)$-distributed storage system, six distinct permutation schemes are possible. Let $\Sigma=\{\textsf{1},\ldots,\textsf{6}\}$ be the set of all permutation functions. The concatenation scheme has following random variables.
\begin{align}
V= & (S^{\textsf{1}},\ldots,S^{\textsf{6}})\\
A_i= & (Y_{\textsf{1}(i)}^{\textsf{1}},\ldots,Y_{\textsf{6}(i)}^{\textsf{6}}), i\in\{1,2,3\}\\
B_{i[j]}= & (U_{\textsf{1}(i)[\textsf{1}(j)]}^{\textsf{1}},\ldots,U_{\textsf{6}(i)[\textsf{6}(j)]}^{\textsf{6}}),\nonumber\\
  & i\in\{1,2,3\}, j \in \{1,2,3\}\setminus \{i\}
\end{align}
\end{example}

Now we show that, for any admissible rate $R$ in the zero-error capacity region, it is sufficient to consider concatenation scheme.

\begin{theorem}\label{thm:sufficiency}
If the rate $R$ is admissible for a given $(n,k,d,\alpha,\beta)$-distributed storage system, then there exists a concatenation scheme with admissible rate $R$.
\end{theorem}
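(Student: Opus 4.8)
The plan is to show that starting from an admissible rate $R$, which by definition is witnessed by a sequence of zero-error exact repair codes $(\Phi^{(m)},\Psi^{(m)})$ satisfying the storage, repair-link and decoding constraints in the limit, one can build, for each $m$, a concatenation code of the same normalized rate. First I would fix $m$ and a single code $(\Phi,\Psi)$ from the sequence. For each permutation $\sigma\in\Sigma$ I would define a \emph{permutation scheme} by relabeling node indices through $\sigma$: the storage function at node $\sigma(i)$ acts as the original storage function at node $i$ did, the repair function $\phi_{U_{\sigma(i)[\sigma(j),\sigma(\mathcal D)]}}$ mimics $\phi_{U_{i[j,\mathcal D]}}$, and correspondingly the decoders $\psi_{Y_{\sigma(j)}}$ and $\psi_S$ are the relabeled versions of the originals. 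The key (routine) observation is that because the distributed storage constraints are themselves symmetric under relabeling of $\mathcal N$ — any $k$-subset must reconstruct $S$, any $d$-subset must repair any node — each permutation scheme is again a valid zero-error exact repair code for the \emph{same} $(n,k,d,\alpha,\beta)$ parameters, with the same source distribution (since $S^\sigma$ is distributed as $S$) and the same per-symbol alphabet sizes.

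Next I would concatenate: take the $n!$ permutation schemes acting on independent source copies $S^\sigma$, $\sigma\in\Sigma$, and form $V=(S^\sigma:\sigma\in\Sigma)$, $A_i=(Y^\sigma_{\sigma(i)}:\sigma\in\Sigma)$, and $B_{i[j,\mathcal D]}=(U^\sigma_{\sigma(i)[\sigma(j),\sigma(\mathcal D)]}:\sigma\in\Sigma)$. I would check that the concatenated functions respect all the required encoding/decoding relations coordinatewise: $A_i$ is a function of $V$ because each $Y^\sigma_{\sigma(i)}$ is a function of $S^\sigma$; $B_{i[j,\mathcal D]}$ is a function of $A_i$ because each $U^\sigma_{\sigma(i)[\sigma(j),\sigma(\mathcal D)]}$ is a function of $Y^\sigma_{\sigma(i)}$; $Y_j$-type variables are recovered from the appropriate $B$-bundle, and $S$ (i.e. $V$) is recovered from any $k$-subset of the $A_i$'s, all with zero error since this holds in each coordinate $\sigma$ by the validity of the $\sigma$-th permutation scheme. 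Hence the concatenation scheme is itself a valid zero-error exact repair code.

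The rate bookkeeping is the last step: since $|\mathcal V^{(m)}| = \prod_{\sigma}|\mathcal S^{(m)}| = |\mathcal S^{(m)}|^{n!}$ and likewise $|\mathcal A_i^{(m)}| \le \prod_\sigma |\mathcal Y^{(m)}_{\sigma(i)}|$ and $|\mathcal B_{i[j,\mathcal D]}^{(m)}| \le \prod_\sigma |\mathcal U^{(m)}_{\sigma(i)[\sigma(j),\sigma(\mathcal D)]}|$, taking logarithms, dividing by $\log|\mathcal V^{(m)}| = n!\log|\mathcal S^{(m)}|$, and passing to the limit $m\to\infty$, the factor $n!$ cancels and the normalized storage cost stays $\le\alpha$ and the normalized repair cost stays $\le\beta$, while the source rate of $V$ is $n!\,H(S^{(m)})$ against a source alphabet of size $|\mathcal S^{(m)}|^{n!}$, giving the same admissible $R$. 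Therefore $R$ is admissible via a sequence of concatenation schemes. The part requiring the most care is not any single calculation but making the relabeling in the permutation scheme fully precise — in particular verifying that the index sets $\sigma(\mathcal D)\subseteq\mathcal N\setminus\{\sigma(j)\}$ range over exactly the same family of $d$-subsets as $\mathcal D$ ranges over $d$-subsets of $\mathcal N\setminus\{j\}$, so that the collection of constraints is genuinely invariant and no repair or reconstruction requirement is dropped or duplicated.
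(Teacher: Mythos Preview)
Your proposal is correct and follows essentially the same approach as the paper: take the admissible sequence of codes, form all $n!$ permutation schemes on independent source copies, concatenate, and observe that the resulting scheme has source rate, storage cost, and repair cost all scaled by $n!$ so that the normalized quantities---and hence the admissible rate $R$---are preserved. Your write-up is in fact more careful than the paper's, which leaves the coordinatewise verification of the encoding/decoding constraints and the rate bookkeeping implicit, merely noting that the concatenation is a code for an $(n,k,d,n!\alpha,n!\beta)$ system with source rate $n!R$ and invoking a time-sharing/space-sharing analogy.
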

\begin{proof}
Let $R$ be the rate admissible by a sequence of exact repair codes $(\Phi^{(m)},\Psi^{(m)})$ for a given $(n,k,d,\alpha,\beta)$-distributed storage system with random variables $S, Y_1,\ldots,Y_n, U_{i[j,\mathcal D]}, j \in \mathcal N, i \in \mathcal D \subseteq \mathcal N \setminus\{j\}$. Then, the concatenation scheme can be designed with random variables
\begin{align}
V&=(S^{\sigma},\sigma\in \Sigma)\\
A_{i}&=(Y_{\sigma(i)}^{\sigma}, \sigma \in \Sigma)\\
B_{i[j,\mathcal D]}&= (U_{\sigma(i)[\sigma(j),\sigma(\mathcal D)]}^{\sigma}, \sigma \in \Sigma)
\end{align}
Note that the concatenation scheme with random variables $V, A_{i},B_{i[j,\mathcal D]}$ is a code for a $(n,k,d,n! \alpha,n! \beta)$-distributed storage system with source information rate $n! \times R$ designed by concatenating all possible permutation schemes.
\end{proof}

In Theorem \ref{thm:sufficiency}, the structure of a concatenation scheme allows us to use an argument in essence similar to time-sharing for channel codes or space-sharing for storage codes.

\begin{theorem}
For a concatenation scheme the following equalities are true for all $\sigma \in \Sigma$.
\begin{align}
H(A_{\gamma})&=H(A_{\sigma(\gamma)}) \label{eq:symmetry1}
\end{align}
\begin{align}
H(B_{\delta})&=H(B_{\sigma(\delta)}) \label{eq:symmetry2}\\
H(A_{\gamma},B_{\delta})&=H(A_{\sigma(\gamma)},B_{\sigma(\delta)})\label{eq:symmetry3}
\end{align}
where $ \gamma \subset \mathcal N$, $\delta \subset \Delta$ and
\begin{align}
\Delta \triangleq \{i[j,\mathcal D] \in \mathcal N \times \mathcal N \times 2^\mathcal N: j \in \mathcal N, i \in \mathcal D \subset \mathcal N \setminus \{j\}\}.\nonumber
\end{align}
\end{theorem}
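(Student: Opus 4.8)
The plan is to reduce the three equalities to Proposition \ref{thm:EntropyEq} by recognizing that the concatenation scheme's random variables are exactly the auxiliary random variables $W_\alpha$ of that proposition, once we choose the right index sets $\mathcal{M}$. For the first equality \eqref{eq:symmetry1}, I would take $\mathcal{M} = \mathcal{J}$ (or, more economically, the symmetric collection $\bigcup_{r} \mathcal{J}_r$ restricted as needed), identify $X_i$ with the storage variable $Y_i$ of the base scheme, and observe that $A_i = (Y^{\sigma}_{\sigma(i)} : \sigma \in \Sigma)$ is literally $W_{\{i\}}$, while $A_\gamma = (A_i : i \in \gamma)$ corresponds to $(W_\alpha : \alpha \in \mathcal{A})$ with $\mathcal{A} = \{\{i\} : i \in \gamma\}$. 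Since $\sigma(\mathcal{A}) = \{\{\sigma(i)\} : i \in \gamma\}$ corresponds to the index set $\sigma(\gamma)$, Proposition \ref{thm:EntropyEq} gives $H(A_\gamma) = H(A_{\sigma(\gamma)})$ immediately.

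The main work is setting up the index bookkeeping for the repair variables $B_{i[j,\mathcal{D}]}$ so that the same machinery applies. Here I would let the ground set on which permutations act remain $\mathcal{N}$, but enlarge the family of "labels": instead of subsets $\alpha \subseteq \mathcal{N}$, the relevant objects are the compound indices $\delta \in \Delta$, and a permutation $\sigma$ acts on such an index $i[j,\mathcal{D}]$ by sending it to $\sigma(i)[\sigma(j),\sigma(\mathcal{D})]$. One must check this action is well-defined (it preserves the constraint $i \in \mathcal{D} \subset \mathcal{N}\setminus\{j\}$, which is clear) and that it is a group action, so that the argument of Proposition \ref{thm:EntropyEq} — which only used that $\sigma \mapsto \mu\sigma$ is a bijection of $\Sigma$ and the reindexing $\mu = \zeta\sigma^{-1}$ — goes through verbatim with $\alpha$ replaced by $\delta$. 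Then $B_{i[j,\mathcal{D}]} = (U^{\sigma}_{\sigma(i)[\sigma(j),\sigma(\mathcal{D})]} : \sigma \in \Sigma)$ plays the role of $W_\delta$, and \eqref{eq:symmetry2} follows the same way.

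For the joint equality \eqref{eq:symmetry3} I would run the argument once more on the combined collection: take the i.i.d. family indexed by permutations whose per-permutation block is the pair $(Y^{\sigma}_{\sigma(\cdot)}, U^{\sigma}_{\sigma(\cdot)[\sigma(\cdot),\sigma(\cdot)]})$, i.e. treat $(Y_i)_{i\in\mathcal{N}}$ together with $(U_{i[j,\mathcal{D}]})_{i[j,\mathcal{D}]\in\Delta}$ as a single indexed collection of base random variables on which $\Sigma$ acts diagonally. Since across distinct permutations the blocks are independent and within a permutation the joint distribution of any sub-collection is invariant under relabeling by that permutation, the chain of equalities $H(\cdot) = \sum_{\zeta\in\Sigma} H(\text{block indexed by } \zeta\text{, restricted to }\gamma\cup\delta) = \sum_{\mu\in\Sigma} H(\text{block}, \mu\sigma\text{-relabeled}) = H(A_{\sigma(\gamma)}, B_{\sigma(\delta)})$ reproduces Proposition \ref{thm:EntropyEq} for mixed index sets. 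The one point requiring a line of justification is that mixing $\gamma \subseteq \mathcal{N}$ with $\delta \subseteq \Delta$ causes no clash: a storage index $i$ and a repair index $i'[j',\mathcal{D}']$ are objects of different types, $\sigma$ acts on each type compatibly, and the superscript/independence convention from the Remark ensures the sum-over-$\Sigma$ decomposition still holds.

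The step I expect to be the genuine obstacle — more a matter of care than of difficulty — is verifying that the per-permutation independence really does give the additive decomposition $H(W_\alpha, \alpha\in\mathcal{A}) = \sum_{\zeta} H(X^{\zeta}_{\zeta(\alpha)}, \alpha\in\mathcal{A})$ in the mixed setting, because the base random variables $Y_i$ and $U_{i[j,\mathcal{D}]}$ of a single scheme are \emph{not} independent of each other (indeed $U_{i[j,\mathcal{D}]}$ is a function of $Y_i$); independence holds only \emph{across} the superscript $\sigma$. So I would state explicitly that the construction takes $\{(S^\sigma, (Y^\sigma_\ell)_\ell, (U^\sigma_{\cdot})_\cdot) : \sigma\in\Sigma\}$ to be mutually independent \emph{as blocks indexed by $\sigma$}, each block distributed as the base scheme, and that this is exactly the hypothesis under which Proposition \ref{thm:EntropyEq}'s proof operates; everything else is the same reindexing $\mu = \zeta\sigma^{-1}$ applied three times.
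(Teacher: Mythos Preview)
Your proposal is correct and follows the same approach as the paper: both reduce the theorem to Proposition~\ref{thm:EntropyEq} by identifying the concatenation-scheme variables $A_i$ and $B_{i[j,\mathcal D]}$ with the auxiliary variables $W_\alpha$ built from i.i.d.\ permuted copies. The paper's own proof is a single sentence invoking that proposition, so your careful bookkeeping (the action of $\Sigma$ on the compound indices in $\Delta$, the block-independence across superscripts, and the mixed case) simply makes explicit what the paper leaves to the reader.
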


\begin{proof}
The power set of the set of random variables in a concatenation scheme is a symmetrical set. The equalities then follows by Proposition \ref{thm:EntropyEq} for symmetrical sets.
\end{proof}
\section{Applications}\label{sec:A}
\subsection{Simplification of the LP Bound}
In this section, we demonstrate an example of reduction in the number of dimensions of LP bound computation for distributed storage systems using symmetry result for a concatenation scheme.
\begin{example}[The LP bound dimensions reduction for $(3,2,2,\alpha,\beta)$-distributed storage system]
The first reduction is obtained using functional dependence relationships among random variables \cite{ThaGraCha09}, \cite{ThaGraCha11}. Below are the maximal irreducible sets \cite{ThaGraCha09} for the distributed storage system.
\begin{gather*}
\{S\},\{Y_1,Y_2\},\{Y_1,Y_3\},\{Y_2,Y_3\},\{Y_1,U_{2[3]}\},\{Y_1,U_{3[2]}\},\\
\{Y_2,U_{1[3]}\},\{Y_2,U_{3[1]}\},\{Y_3,U_{1[2]}\},\{Y_3,U_{2[1]}\},\\
\{U_{2[1]},U_{3[1]},U_{2[3]}\},\{U_{2[1]},U_{3[1]},U_{3[2]}\},\\
\{U_{1[2]},U_{3[2]},U_{1[3]}\},\{U_{1[2]},U_{3[2]},U_{3[1]}\},\\
\{U_{1[3]},U_{2[3]},U_{1[2]}\},\{U_{1[3]},U_{2[3]},U_{2[1]}\}
\end{gather*}
Other sets which are irreducible but not maximal and not subset of any irreducible set are as follows.
\begin{gather*}
\{Y_1,U_{2[1]}\},\{Y_1,U_{3[1]}\},\{Y_2,U_{1[2]}\},\{Y_2,U_{3[2]}\},\\
\{Y_3,U_{1[3]}\},\{Y_3,U_{2[3]}\},\{U_{1[2]},U_{2[3]},U_{3[1]}\},\\
\{U_{1[3]},U_{2[1]},U_{3[2]}\}
\end{gather*}
All maximal irreducible sets have the same entropy and for every reducible set, there exists a strict subset which is irreducible and have the same entropy as the reducible set \cite{ThaGraCha09}. Hence, the dimensions of the LP problem are any one maximal irreducible set and all other non-maximal irreducible sets. There are 30 such sets:
\begin{gather*}
\{S\},\{Y_1\},\{Y_2\},\{Y_3\},\{U_{2[3]}\},\{U_{3[2]}\},\{U_{1[3]}\},\{U_{3[1]}\},\\
\{U_{1[2]}\},\{U_{2[1]}\},\{U_{2[1]},U_{3[1]}\},\{U_{2[1]},U_{2[3]}\},\{U_{3[1]},U_{2[3]}\},\\
\{U_{2[1]},U_{3[2]}\},\{U_{3[1]},U_{3[2]}\},
\{U_{1[2]},U_{3[2]}\},
\{U_{1[2]},U_{1[3]}\},\\
\{U_{3[2]},U_{1[3]}\},
\{U_{1[2]},U_{3[1]}\},
\{U_{1[3]},U_{2[3]}\},
\{U_{2[3]},U_{1[2]}\},\\
\{U_{1[3]},U_{2[1]}\},
%
\{Y_1,U_{2[1]}\},
\{Y_1,U_{3[1]}\},\{Y_2,U_{1[2]}\},\\
\{Y_2,U_{3[2]}\},\{Y_3,U_{1[3]}\},\{Y_3,U_{2[3]}\},
\{U_{1[2]},U_{2[3]},U_{3[1]}\},\\\{U_{1[3]},U_{2[1]},U_{3[2]}\}
\end{gather*}
Now, using the sufficiency and symmetry results for concatenation scheme, the number of dimensions for the LP problem reduces to 9:\vspace{-2mm}
\begin{gather*}
\{S\},\{Y_1\},\{U_{1[2]}\},\{U_{1[2]},U_{3[2]}\},\{U_{1[2]},U_{1[3]}\},\\
\{U_{1[2]},U_{3[1]}\},
\{U_{1[2]},U_{2[3]}\},\{Y_1,U_{2[1]}\},\{U_{1[2]},U_{2[3]},U_{3[1]}\}
\end{gather*}
\end{example}
\subsection{Proving Converse Theorems}
As we already mentioned in Section \ref{sec:I}, symmetry in $(4,3,3,\alpha,\beta)$-exact repair distributed storage \cite{Tia13} is observed and utilised to show a converse result. We strongly believe that symmetry results proved in this paper for $(n,k,d,\alpha,\beta)$-exact repair distributed storage systems will be useful to prove new converse results for general $n,k,d$. One of our future direction of research is to investigate use of symmetry results to analytically derive stronger converse theorems for general $(n,k,d,\alpha,\beta)$-exact repair distributed storage systems.
\section{Conclusion}
We proposed a concatenation scheme with symmetric structure for exact repair distributed storage and shown its sufficiency for any admissible rate. Using symmetry in the concatenation scheme we proved equalities of joint entropies under permutation for subset of random variables in concatenation scheme. One application of the main results is to reduce the size of the LP bound and is demonstrated by an example. The equalities proved for exact repair distributed storage are important new tools and future direction of research is to employ these tools to derive better outer bounds on the capacity of general $(n,k,d,\alpha,\beta)$-exact repair distributed storage systems.
\section*{Acknowledgement}
The authors thank Prof. Raymond W. Yeung for the useful discussions. The work of Satyajit Thakor and Kenneth W. Shum was partially supported by a grant from the University Grants Committee of the Hong Kong Special Administrative Region, China (Project No. AoE/E-02/08). The work of Terence Chan was  supported by the Australian Government under ARC grant DP1094571.
\bibliographystyle{IEEEtran}
\bibliography{DSbib}
\end{document}